\newtheorem{thm}{Theorem}
\newtheorem{defn}{Definition}
\begin{document}
%
\title{Combined Cooling, Heating, and Power System in Blockchain-Enabled Energy Management}
%
%
%

\author{Jianxiong Guo,
	Xingjian Ding,
	Weili Wu,~\IEEEmembership{Member,~IEEE}
	\thanks{J. Guo and W. Wu are with the Department
		of Computer Science, Erik Jonsson School of Engineering and Computer Science, Univerity of Texas at Dallas, Richardson, TX, 75080 USA; X. Ding is with the School of Information, Renmin University of China, Beijing, CHN
		
		E-mail: jianxiong.guo@utdallas.edu}
	\thanks{Manuscript received April 19, 2005; revised August 26, 2015.}}

%
%

\markboth{Journal of \LaTeX\ Class Files,~Vol.~14, No.~8, August~2015}%
{Shell \MakeLowercase{\textit{et al.}}: Bare Demo of IEEEtran.cls for IEEE Journals}
%



\maketitle

\begin{abstract}
The combined cooling, heating and power (CCHP) system is a typical distributed, electricity-gas integrated energy scheme in a community. First, it generates electricity by use of gas, and then exploits the waste heat to supply community with heat and cooling. In this paper, we consider a smart city consisting of a number of communities (CCHPs) and an agent of power grid (APG), where CCHPs can sell energy to the APG according to its bid. To study all utilities of entities in such a city from energy trading, a noncooperative Stackelberg game between APG and CCHPs is formulated. Here, the APG gives a bid for buying the energy from CCHPs, then CCHPs respond to the APG with their optimal energy supply that maximizing their utilities according to this bid. We show that the maximum profit to the APG and utilities to the CCHPs can be obtained at the Stackelberg equilibrium, which is guaranteed to exist and unique. Because the complete information about energy supply of each CCHP is unknown to the APG in advance, we propose a distributed algorithm that is able to find the point of equilibrium through a limited number of iterations. Taking privacy protection and transaction security into consideration, we design a blockchain-enabled energy management system. This system is composed of Internet of Energy (IoE) sub-system and blockchain sub-system, where the information interactions as well as energy transactions between APG and CCHPs can be carried out effectively and safely. Finally, security analysis and numerical simulations show the effectiveness and accuracy of our proposed mechanism.
\end{abstract}

\begin{IEEEkeywords}
CCHP system, Distributed energy management, Smart city, Stackelberg game, Internet of Energy (IoE), Blockchain, Privacy and Security
\end{IEEEkeywords}

%
\IEEEpeerreviewmaketitle

\section{Introduction}
%
%
%
%
\IEEEPARstart{T}{o} reduce the cost of energy and emission of greenhouse gas, distributed energy system (DES) combined with multiple energy sources have been developed effectively in the last decades. Not only can this energy generated by DES by used by the residents in the local community, but also it can be integrated into the grids. Even though renewable energy \cite{akhtar2015energy}, such as wind energy, hydro energy and solar energy, is gaining popularity, they have not yet become the mainstream methods for supplying energy because of the restrictions on various conditions. Thus, fossil Energy still dominates in the current energy supply. Integrated community energy system (ICES) is an integrated energy system on the user side, which can be considered as a node in a large DES. ICES undertakes the task of satisfying local users' all the energy demands, and can make full use of local natural resources \cite{yu2016brief}. ICES is formed by the coupling of energy supply networks such as medium and low voltage power distribution systems, medium and low voltage natural gas systems, heating systems, and cooling systems \cite{wang2016review}. The joint operation of multiple kinds of energy is of great significance for building an Internet of Energy, improving energy efficiency, and building an environment-friendly society together.

The combined cooling heating and power (CCHP) system is a typical ICES that uses gas turbines or internal combustion engines to generate electricity, heat exchangers to provide heat, and absorption refrigerators to provide cooling. The combined operating efficiency can reach more than 80\%, which is one of the most promising  operating modes in today's integrated energy systems \cite{wu2006combined} \cite{jing2012fuzzy} \cite{jradi2014tri}. One of its advantages is that it can make full use of the local rich natural resources. For example, if coal is produced locally, use coal as fuel; if gas is produced locally, use gas as fuel; wind, hydro and solar energy can do the same thing. Thus, it avoids some of the inherent shortcomings of centralized energy system. In a traditional power grid, the electric energy generated from the centralized node, such as power plant, has to be transmitted by a complex mesh, which results in high losses during transmission and thus low efficiency \cite{matamoros2012microgrids} \cite{wu2015optimal}. Typically, in a community, it is equipped with a CCHP system that meets all energy supply, including cooling, heating, and electricity, for the local residents, and remaining energy can be sold to the power grid or heat/cooling station.

Consider a smart city, it consists of an agent of power grid (APG) responsible for collecting electricity from CCHPs and a number of communities, each of which is equipped with a CCHP system. In this paper, we study the energy trading between APG and CCHPs using noncooperative game theory. As far as we know, this is the first time to consider the energy trading problem in the scenarios of ICES. First, creative utility and profit functions for CCHPs and APG in a city are proposed, which consider both real-world scenarios and mathematical operability. For each CCHP in this city, its utility consists of two parts: one is to serve the residents living in the community it serves by providing them with the energy needed for life, and the other is sold electricity to the APG of its corresponding city for gaining revenues. For the APG in this city, its utility is from the difference in price between market price of electricity and bid for buying electricity from CCHPs belonging to this city. Assume that the production capacity for a day is fixed, a natural problem for each CCHP is how much energy should be sold to the APG. The more it sells, the less it leaves for itself. To encourage the CCHPs to sell more energy, the AGP should offer them a higher price. But like this, it has the potential to make less profit because of the higher cost of buying energy from CCHPs. Thus, this is a dilemma. Then, due to the multilevel decision-making processes of APG and CCHPs in a city, we formulate a Stackelberg game to model the bargain between them, where the AGP is leader and CCHPs are followers. The leader offers a price for buying energy from followers, such that its profit to meet the day's electric load can be maximized. At the same time, the CCHPs respond to the APG with the amount of electricity they are willing to sell given the price offered by APG to maximize their utilities. The properties of this Stackelberg game are analyzed in this paper, and we prove the Stackelberg equilibrium (SE) exists and unique. Because the responses of CCHPs are unpredictable in advance, we propose a distributed algorithm that is guaranteed to reach the unique SE by limited information interactions.

Then, let us imagine a larger area, such as a country, which consists of a number of cities, and each city is covered by an APG responsible for collecting electricity from all communities (CCHPs) in this city. From this, an Internet of Energy (IoE) is formulated. Now, the core of the problem is how to protect privacy in the process of information interaction and how to record transactions safely and effectively. With the rise of blockchain technology, it is widely used as a technical means in energy trading because of its decentralization, security,  and anonymity. Blockchain is a distributed ledger that is able to record transactions in a permanent and verifiable manner. Combined with blockchain technology, we propose a blockchain-enabled energy management system, where energy trading is paid by energy coins. This system can be divided into two sub-systems, called IoE sub-system and blockchain sub-system. The IoE sub-system is made up of the P2P network that connects between APG and CCHPs in a city. It is formed in units of cities, whose main task is to complete the information exchanges and generate transactions between APG and CCHPs within the city. However, the blockchain sub-system is based on the P2P network that connects all cities (APGs) in the whole country. Each APG in the country can be regarded as a node in the blockchain sub-system, which stores complete information about the blockchain distributedly. Thus, the blockchain is established on all APGs in the country to share and verify energy transactions that occurred in this system without third trusted institutions. Here, we can notice that these APGs play the role of both a trader in IoE sub-system and a storer (validator) in blockchain sub-system. They are performed independently by different servers in the APG and do not interfere with each other.

The rest of this paper is organized as follows: Sec. \uppercase\expandafter{\romannumeral2} discusses the-state-of-art work. Sec. \uppercase\expandafter{\romannumeral3} introduces background knowledges about CCHP and defines utility functions. Sec. \uppercase\expandafter{\romannumeral4} presents Stackelberg game and distributed algorithm. Sec. \uppercase\expandafter{\romannumeral5} describes blockchain-enable energy management system in detail. Sec. \uppercase\expandafter{\romannumeral6} analyzes security of our system and conducts numerical simulations. Sec. \uppercase\expandafter{\romannumeral7} is conclusion.

\section{Related Work}
CCHP system and its varieties have been applied widely in many situations to save energy, cost and reduce emissions of greenhouse gas \cite{wu2006combined} \cite{jing2012fuzzy} \cite{jradi2014tri}. Cardona \textit{et al.} \cite{cardona2004validation} explored the economics of CCHP preliminarily, and proposed electric/thermal demand management. The choice between them was determined by a set of complex factors, such as motivation, ability to sell back to the grid and so on. Mago \textit{et al.} \cite{mago2009analysis} constructed a complete framework to analyze and optimize CCHP system following electric load and the thermal load strategies based on their fuel consumption, operation cost, and carbon dioxide emissions. Cho \textit{et al.} \cite{cho2014combined} gave us a comprehensive survey through summarizing more than 170 existing papers, which covers all aspects of CCHP system basically.

Energy management problems about how to integrate DES into a smart grid have been studied intensively. Georgilakis \textit{et al.} \cite{georgilakis2013optimal} summarized the optimally distributed generation placement problem systematically, classified and analyzed current and future research about it. Zhang \textit{et al.}  \cite{zhang2013efficient} considered microgrid as a local energy supplier for domestic buildings by utilizing DES, and stuidied optimal scheduling of energy consumption through mixed-integer programming. Cecati \textit{et al.} \cite{cecati2011smart} exploited DES to make the cost of power delivery minimized by use of an efficient smart grid management system. In addition, the Stackelberg game is suitable for analyzing and designing an energy management system. Maharjan \textit{et al.} \cite{maharjan2013dependable} addressed the demand response management problem by means of establishing a Stackelberg game between multiple utility companies and customers to maximized the profit of each company and utility of each customer. Meng \textit{et al.} \cite{meng2013stackelberg} studied a Stackelberg game between electricity retailer and customers. They adopted genetic algorithms for the retailer to maximize its profit, and developed an analytical solution for customers to minimize their bills. Bu \textit{et al.} \cite{bu2013game} considered a real-time pricing problem for the electricity retailer in the demand-side management, proposed a four-stage Stackelberg game and solved it by a backward induction process. Tushar \textit{et al.} \cite{tushar2014three} proposed an energy management scheme, and formulated a Stackelberg game between residential units and the shared facility controller that can buy energy from residential units or grids. Other researches about game in energy management are shown as \cite{tushar2014prioritizing} \cite{bu2012smart} \cite{asimakopoulou2013leader}.

To deal with the transaction security issues in P2P energy trading, many latest works about it adopt the blockchian technology. Kang \textit{et al.} \cite{kang2017enabling} put forward a localized P2P electricity trading pattern based on consortium blockchain among plug-in hybrid electric vehicles. It was achieved by giving rewards to discharging, thereby balanced the local electricity demand. Li \textit{et al.} \cite{li2017consortium} proposed a P2P energy trading architecture based on consortium blockchain for the industrial Internet of Things. To reduce the transaction confirmation delay, they produced a credit-based payment scheme. Liu \textit{et al.} \cite{liu2018adaptive} designed an adaptive blockchain-based charging scheme to reduce power fluctuation in the grid and cost of electric vehicle users. Aggarwal \textit{et al.} \cite{aggarwal2018energychain} proposed an EnergyChain that permits energy trading between grid and home in a secure manner, including miner choice, transaction verification and block adding. Zhou \textit{et al.} \cite{zhou2019secure} considered the scenario of vehicle-to-grid, and developed a secure energy trading mechanism based on consortium blockchain. Even though that, the design pattern of our blockchain system is different from them.

\begin{figure*}[!t]
	\centering
	\includegraphics[width=\textwidth]{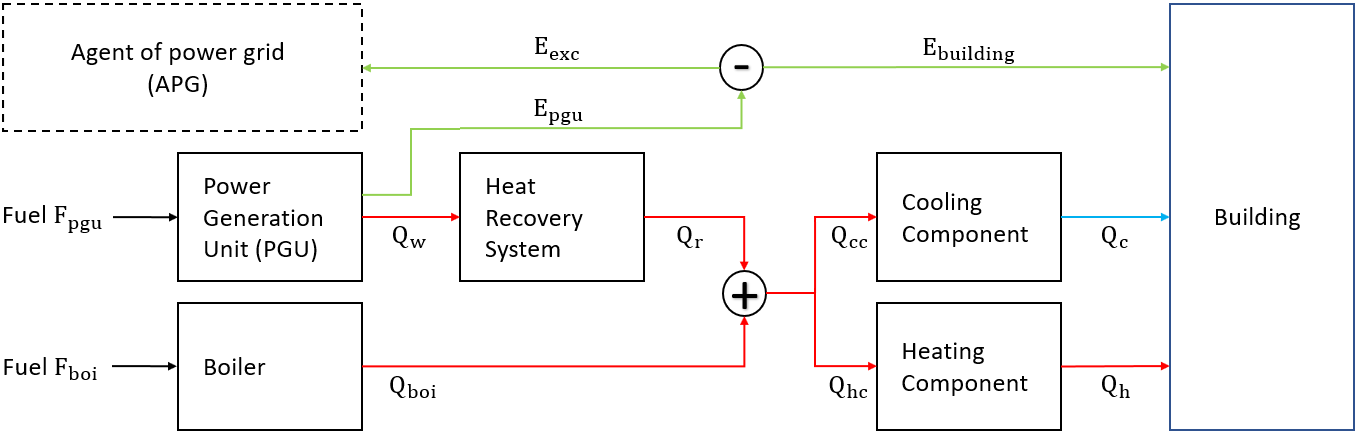}
	\caption{The architecture of combined cooling, heating and power (CCHP) system.}
	\label{fig1}
\end{figure*}

\section{CCHP System model}
The combined cooling, heating and power (CCHP) system is the newest technique of distributed energy utilization. It exploits the recoverable waste heat generated by the electrical power system for the purpose of space cooling, heating, and hot water. In this section, we describe how to model the CCHP system. The architecture of CCHP system is shown in Fig. \ref{fig1}. The fuel, referred to as natural gas in this paper, is inputted into the power generation unit (PGU). It generates electricity that is required to be used in building for lighting, electronic equipment, etc., or sell part of extra electricity to the agent of power grid (APG). Thermal PGU will emit high-temperature smoke gas after generating electricity, called ``waste heat'', which can be used for heating by heating component (heating coil) or cooling by cooling component (chiller). When the waste heat is insufficient for supply to building, it can be replenished by the input of boiler. Then, we talk about the basic properties of CCHP first in this section.

\subsection{Basic Properties of CCHP}
The CCHP system model we describe here follows the electric load, thus the total electricity required by the building can only be supplied by PGU. Thus, we have
\begin{equation}
E_{pgu}=E_{building}+E_{exc}
\end{equation}
Measured in days, the units of $E_{pgu}$, $E_{building}$ and $E_{exc}$ are $\text{J}/\text{day}$. From the PGU, the electric energy demand is equal to electric energy consumption per day. Here, we define a partition coefficient $\beta=E_{building}/E_{pgu}$. The PGU fuel consumption can be defined, that is
\begin{equation}
E_{pgu}=\eta_{pgu}\cdot(q\cdot F_{pgu})
\end{equation}
where $q$ $(\text{J}/\text{m}^3)$ is the calorific value of given fuel and, hence the total thermal energy generated by PGU per day is $q\cdot F_{pgu}$ definitely. For the natural gas, we have $q=3.6\times10^7$ $\text{J}/\text{m}^3$. The unit of gas intake $F_{pgu}$ is $\text{m}^3/\text{day}$. The $\eta_{pgu}$ is the conversion efficiency of PGU, percent energy that transferred from heat to electricity. Given a specific CCHP system, its conversion efficiency is assumed to be a constant.

The waste heat $Q_w$ can be computed by the thermal energy generated by fuel $F_{pgu}$ and electric energy $E_{pgu}$. After passing the heat recovery system, we have $Q_r$ as follows:
\begin{equation}
Q_r=\eta_{rec}\cdot Q_w=\eta_{rec}(1-\eta_{pgu})\cdot q\cdot F_{pgu}
\end{equation}
where the $Q_r$ is the recovered thermal energy, and the units of $Q_w$ and $Q_r$ are $\text{J}/\text{day}$. The $\eta_{rec}$ is the efficiency of heat recovery system. Considering the heat energy supplied by boiler, we have
\begin{equation}
Q_{boi}=\eta_{boi}\cdot(q\cdot F_{boi})
\end{equation}
where $\eta_{boi}$ is the thermal efficiency of boiler. To handle the cooling load, the input of thermal energy $Q_{cc}$ to cooling component can be defined as $Q_c=COP_{cc}\cdot Q_{cc}$, where $Q_c$ $(\text{J}/\text{day})$ is cooling load and $COP_{cc}$ is the coefficient  of performance of chiller. Similarly, to handle the heating load, the input of thermal energy $Q_{hc}$ to heating component can be defined as $Q_h=\eta_{hc}\cdot Q_{hc}$, where $Q_h$ $(\text{J}/\text{day})$ is heading load and $\eta_{hc}$ is the thermal efficiency of coil. By the heat balance, we can know that
\begin{equation}
Q_r+Q_{boi}=Q_{cc}+Q_{hc}
\end{equation}

However, it is complex to determine how to distribute the total thermal energy, $Q_r+Q_{boi}$, to cooling component $Q_{cc}$ and heating component $Q_{hc}$. For example, in summer, the cooling load is significantly heavier than in other seasons due to the hot weather; but in winter, heating demand is higher because need to heat the space of buildings. Apart from this, in different regions, such as tropical or temperate regions, even at different times of the day, the requirements for cooling and heating are different as well. Thus, the allocation of total thermal energy to cooling and heating component depends more on experience. Even though that, we can observe a rule that the peak time of the cooling and heating load. For simplicity, we consider the cooling and heating load as a whole (a control body), and consider its comprehensive thermal efficiency, that is
\begin{equation}
Q_{com}=Q_c+Q_h=\eta_{com}\cdot(Q_r+Q_{boi})
\end{equation}
where $Q_{com}$ $(\text{J}/\text{day})$ is the sum of cooling load and heating load, and $\eta_{com}$ is its comprehensive thermal efficiency, such that $\eta_{com}\in[\min\{COP_{cc},\eta_{hc}\},\max\{COP_{cc},\eta_{hc}\}]$. It can be determined by historic records of cooling and heating load, and we assume it to be a constant. The structure of control body is shown in Fig. \ref{fig2} as follows.

\begin{figure}[h]
	\centering
	\includegraphics[width=\linewidth]{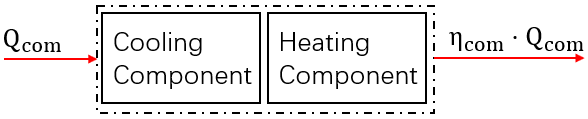}
	\caption{The structure of control body, where we consider cooling/heating components as an entity.}
	\label{fig2}
\end{figure}

Finally, in order to computing cost, the total fuel consumption $F_{tot}$ per day can be defined as
\begin{equation}
F_{tot}=F_{pgu}+F_{boi}
\end{equation}

\subsection{Distributed System Model}
Consider in a city, there are a number of separate communities, and in each community, there is a CCHP associated with it and responsible for supplying cooling, heat, and electricity to these residents living in this community. In this city, it exists an APG, which collects extra electric energy from communities (CCHPs) belonging to this city. For one thing, each CCHP addresses all energy supplies in its community. A CCHP can be a single, or several CCHPs integrated by an aggregate, which can be considered as a single CCHP entity. It can decide the amount of fuel to be consumed to generate thermal and electric energy, and how much electric energy is sold to the APG for making revenue. For another thing, the APG attempts to buy electricity from these CCHPs in this city at a lower price, transfers it to other places where electricity is needed, and sell it at a normal electricity price.

First, it assumes there is a set $\mathcal{I}$ containing all CCHPs in a city. Each CCHP $i\in\mathcal{I}$ consumes fuel $F^i_{pgu}$ and $F^i_{boi}$ independently, and it is able to manage its partition coefficient $\beta^i$ that sells $(1-\beta^i)\cdot E^i_{pgu}$ electric energy to the grid for making revenue. To reduce pollution, the government usually adopts some policies to restrict the gas usage for each community. For each CCHP $i\in\mathcal{I}$, we say its gas usage restriction is given by $F^i_{tot}$. If it exceeds $F^i_{tot}$, this community will face serious fines and bad credit records. Thus, we can consider $F^i_{tot}$ as a predefined constant, which indicates how much fuel can be consumed at most in a day. Then, we denote by $p_b$ the buying price of APG from CCHPs and $p_s$ the selling (retail) price of APG, where the APG is willing to pay $p_b$ per unit of energy to each CCHPs for purchasing their electricity. Here, we define a partition coefficient $\alpha^i=F^i_{pgu}/F^i_{tot}$. Based on that, we propose the utility function for each CCHP $i\in\mathcal{I}$ obtained from its $\alpha^i$ and $\beta^i$, that is
\begin{flalign}
U^i&=k^i_1\ln(1+b^i_1qF^i_{tot}\eta^i_{pgu}\alpha^i\beta^i)\nonumber\\
&+k^i_2\ln(1+b^i_2((qF^i_{tot}\eta_{rec}^i(1-\eta_{pgu}^i)\alpha^i+qF^i_{tot}\eta^i_{boi}\nonumber\\
&(1-\alpha^i))\eta_{com}))\nonumber\\
&+p_b(qF^i_{tot}\eta^i_{pgu}\alpha^i(1-\beta^i))
\end{flalign}
where $k^i_1$, $b^i_1$, $k^i_2$ and $b^i_2$ are adjustable parameters. The utility function $U^i(\alpha^i,\beta^i)$ is composed of three items. In Equation (8), the first item is the utility of electric energy generated by CCHP $i$ that used for electricity demand in its community. The second item is the heat energy generated by CCHP $i$ that used for cooling and heating demand in its community, and it is composed of waste heat from PGU and boiler. The third item is the revenue from the electric energy that sold to the APG at the price of $p_b$.

Unfortunately, the utility $U^i(\alpha^i,\beta^i)$ is a multivariate function, and cannot be guaranteed to be a concave function because the first item involves two variables. This causes great difficulty for our follow-up mathematical analysis processing. Generally, the electric load of a community is much larger than its heating and cooling load, because electricity is a more advanced energy source, and can be converted to other energy easily. Given total fuel $F^i_{tot}$, it is valid to think the waste heat $Q^i_w$ from EPG, $qF^i_{tot}\cdot (1-\eta_{pgu}^i)\cdot\eta_{rec}^i\cdot\eta^i_{com}$, is enough to supply heat energy for the residents in community $i$. Thus, all the fuel should be inputted into PGU, namely we can set $\alpha^i=1$. Like this, utility function $U^i(\alpha^i,\beta^i)$ can be converted to univariate function $U^i(\beta^i)$, that is
\begin{flalign}
U^i&=k^i_1\ln(1+b^i_1qF^i_{tot}\eta^i_{pgu}\beta^i)\nonumber\\
&+k^i_2\ln(1+b^i_2qF^i_{tot}(1-\eta_{pgu}^i)\eta_{rec}^i\eta_{com}^i)\nonumber\\
&+p_b(qF^i_{tot}\eta^i_{pgu}(1-\beta^i))
\end{flalign}
where $\beta^i\in[0,1]$. Consider the item, $\ln(1+x)\in[0,+\infty)$ when $x\geq0$, in related researches, they usually adopted function $k^i\ln(1+x)$ to estimate the utility for community \cite{tushar2014three} \cite{samadi2010optimal} \cite{tushar2012economics}. But it exists a drawback to this utility function that logarithmic function does not have asymptote. Thus, we add a parameter $b^i_1$ such that $\ln(1+b^i_1x)\in[0,+\infty)$, and we can use it to model the the utility of electric energy used in community. This parameter $b^i_1$ to control the variation range of term $\ln(1+b^i_1x)$, to avoid grow infinitely and make each parameter meaningful. Our objective is to let $\ln(1+b^i_1x)=1$ if $\beta^i=1$, at this time, the utility of the first item of (9) reaches the maximum value. Thus, we have
\begin{equation}
b^i_1=\frac{e-1}{qF^i_{tot}\eta^i_{pgu}}
\end{equation}
and for heat energy
\begin{equation}
b^i_2=\frac{e-1}{qF^i_{tot}(1-\eta_{pgu}^i)\eta_{rec}^i\eta_{com}^i}
\end{equation}

On the other hand, consider the APG, it has no power of pricing, because retail price is usually regulated by the government. Thus, the retail price of electricity of APG can be considered as a constant. To make revenue, the APG wishes to collect electric energy from those distributed CCHPs at a price as low as possible. Here, we assume the APG is willing to buy electricity at the price $p_b$ which aims to maximize its earning. Based on Equation (9), if $p_b$ is low, the CCHPs tend to increase their partition coefficient $\beta$, namely sell less, and use more electricity to serve themselves. They consume more energy to improve the quality of life, or simply reduce power generation, results in the revenue of APG is reduced. In contrast, if $p_b$ is high, even approached to retail price $p_s$, the profit per unit of electricity will be small. Therefore, it is important for the APG to set a valid bid $p_b$, not only encourage CCHPs to sell more electricity, but also ensure that the APG has sufficient profitability. Usually, $p_b\in(p_m,p_c)$, where $p_c$ is the cost price of generating electricity by the APG. The APG needs to generate electricity by itself at a cost price $p_c$ when the electric energy collected from CCHPs is not enough to meet the requirement. Then, we propose the utility function to capture the profit gained by the AGP, that is
\begin{flalign}
L&=(p_s-p_b)\sum_{i\in\mathcal{I}}\left(qF^i_{tot}\eta^i_{pgu}(1-\beta^i)\right)\nonumber\\
&+(p_s-p_c)\left(R-\sum_{i\in\mathcal{I}}\left(qF^i_{tot}\eta^i_{pgu}(1-\beta^i)\right)\right)
\end{flalign}
where $R$ is the electric load of APG, and we assume $R\geq\sum_{i\in\mathcal{I}}qF^i_{tot}\eta^i_{pgu}$. The utility function $L(p_b)$ is composed of two items. In Equation (12), the first item is the profit of selling the electric energy bought from all CCHPs. The second item is the profit of selling the electric energy generated by itself, which is constrained by the total demand $R$.

\section{Trading between APG and CCHPs}
In the last section, we have introduced the utility functions of APG and CCHPs, which depend on partition coefficient $\beta^i$ for each $i\in\mathcal{I}$ and bid $p_b$ of APG. In the process of energy trading, they need to determine $\{\beta^i\}_{i\in\mathcal{I}}$ and $p_b$ respectively. The APG offer an appropriate price $p_b$ to purchase the electric energy generated by each CCHP $i\in\mathcal{I}$, in order to maximize its profit $L$. Each CCHP $i\in\mathcal{I}$ responds to APG with the amount of electricity that is willing to sell according to the bid of APG by adjusting parameter $\beta_i$ to maximize its utility $U^i$. In this section, we describe the process of trading and formulate their optimal strategies step by step.

\subsection{Objective of APG and CCHPs}
Given a bid $p_b$ by APG, each CCHP $i\in\mathcal{I}$ hopes to sell part of electric energy to APG for making revenue by controlling the partition coefficient $\{\beta_i\}_{i\in\mathcal{I}}$. Thus, the objective function for each CCHP $i\in\mathcal{I}$ can be defined, that is
\begin{equation}
\max_{\beta^i}U^i(\beta^i) \text{ s.t., } \beta^i\in[0,1]
\end{equation}
Then, its first-order derivative is
\begin{equation}
\frac{\partial U^i}{\partial\beta^i}=qF^i_{tot}\eta^i_{pgu}\cdot\left(\frac{k^i_1b^i_1}{1+b^i_1qF^i_{tot}\eta^i_{pgu}\beta^i}-p_b\right)
\end{equation}
The maximum utility value for each $i\in\mathcal{I}$ can be obtained by solving its first-order defferential condition $\partial U^i/\partial\beta^i=0$, so we have
\begin{equation}
\beta^i_\Diamond=\frac{1}{qF^i_{tot}\eta^i_{pgu}}\cdot\left(\frac{k^i_1}{p_b}-\frac{1}{b^i_1}\right)
\end{equation}
which shows that the response strategy of each CCHP to the bid of APG. Here, we need to note that the choice of parameter $k^i_1$ must be in a valid range such that $\beta^i_\Diamond\in[0,1]$ for any value of bid $p_b\in[p_m,p_c]$. According to Equation (10) and (15), we can get the range of $k^i_1$ as follows:
\begin{equation}
p_c\cdot\left(\frac{qF^i_{tot}\eta^i_{pgu}}{e-1}\right)\leq k^i_1\leq p_m\cdot\left(\frac{eqF^i_{tot}\eta^i_{pgu}}{e-1}\right)
\end{equation}
where it assumes $p_c\leq e\cdot p_m$, or else no such $k^i_1$ can keep $\beta^i_\Diamond\in[0,1]$ satisfied. Form (15), the partition coefficient $\beta^i_\Diamond$ determined by each $i\in\mathcal{I}$ is proportional invervely to the bid $p_b$ of APG. Therefore, the CCHP $i$ tends to sell more electric energy by decreasing $\beta^i$ for a higher buying price.

Conversely, let us look at the side of APG, which aims to maximize its profit by buying electric energy from all CCHPs at a reasonable price. Thus, the objective function of APG can be defined, that is
\begin{equation}
\max_{p_b}L(p_b) \text{ s.t., } p_b\in[p_m,p_c]
\end{equation}
Consider the relationship between $\{\beta_i\}_{i\in\mathcal{I}}$ and $p_b$ from (15), substitute (15) into (12), we have
\begin{flalign}
L&=(p_c-p_b)\sum_{i\in\mathcal{I}}\left(qF^i_{tot}\eta^i_{pgu}-\left(\frac{k^i_1}{p_b}-\frac{1}{b^i_1}\right)\right)\nonumber\\
&+(p_s-p_c)R
\end{flalign}
Then, is first-order derivative is
\begin{equation}
\frac{\partial L}{\partial p_b}=\sum_{i\in\mathcal{I}}\left(\frac{k^i_1p_c}{p_b^2}-\left(qF^i_{tot}\eta^i_{pgu}+\frac{1}{b^i_1}\right)\right)
\end{equation}
The maximum profit value for the APG can be obtained by solving its first-order defferential condition $\partial L/\partial p_b=0$, so we have
\begin{equation}
p_b^\circ=\sqrt{\frac{p_c\sum_{i\in\mathcal{I}}k^i_1}{\sum_{i\in\mathcal{I}}(qF^i_{tot}\eta^i_{pgu}+(b^i_1)^{-1})}}
\end{equation}
which shows that the optimal price $p_b^\circ$ is affected by the number of CCHPs and their properties. From (20), we can observe that the optimal price is interfered by the cost price $p_c$ as well. With the increase of $p_c$, the optimal price should increase theoretically. In order to make the profit maximized, the APG should set its bid according to (20) to collect electricity from CCHPs, but from (17), $p_b\in[p_m,p_c]$, the optimal strategy of APG can be shown as follow:
\begin{equation}
p_b^\Diamond=\left
\{\begin{IEEEeqnarraybox}[\relax][c]{l's}
p_c,&if $p_b^\circ\geq p_c$\\
p_m,&if $p_b^\circ\leq p_m$\\
p_b^\circ,&if $p_m<p_b^\circ<p_c$%
\end{IEEEeqnarraybox}
\right.
\end{equation}
Because the profit function is concave, which will be proved later, $L(p_c)$ is the maximum value when $p_b^\circ\geq p_c$; Similarly, $L(p_m)$ is the maximum value when $p_b^\circ\leq p_m$.

From (20) and (21), the APG can obtain the optimized bid $p_b^\Diamond$ that maximizing its profit function, shown as (18), easily if it can acquire complete information about those parameters, such as $k^i_1$, $b^i_1$, $F^i_{tot}$ and $\eta^i_{pgu}$, for each CCHP $i\in\mathcal{I}$. However in the real situation, it seems unrealistic that complete information about parameter setting of all CCHPs can be accessed by the APG in a direct way because of its flexibility or out of privacy protection. Thus, a noncooperative Stackelberg game is formulated between APG and CCHPs to decide on the variable $\{\beta_i\}_{i\in\mathcal{I}}$ and $p_b$.

\subsection{Noncooperative Stackelberg Game}
A noncooperative Stackelberg game generally refers to the multilevel decision making processes of a number of independent decision-makers in response to the decision taken by the leading player of the game \cite{doi:10.1137/1.9781611971132}. In this section, we formulate a noncooperative Stackelberg game to model the bargain between APG and CCHPs, where the APG is the leader, and the CCHPs are the followers. The game $\mathbb{G}$ is formally defined by its strategic form as
\begin{equation}
\mathbb{G}=\left\{\text{APG}\cup\{i\}_{i\in\mathcal{I}},L,\{U^i\}_{i\in\mathcal{I}},p_b,\{\beta_i\}_{i\in\mathcal{I}}\right\}
\end{equation}
where the bid $p_b$ is offered by APG (leader) first, and each CCHP $i\in\mathcal{I}$ (follower) responds to the APG with its strategy $\beta^i$. Then, $L$ is the profit function of APG, shown as (12), and $U^i$ is the utility function of CCHP $i$, shown as (9). As said before, the purpose of APG and CCHPs are to maximize its profit in (12) and their utilities in (9) by adapting their corresponding trading strategies. The optimal solution of this game can be obtained when the APG can get the maximized profit at a bid $p^*_b$ given the CCHPs' best responses. In other words, none of them, including the leader and followers, can get a larger profit and utilities through altering their strategies unilaterally. At this time, the Stackelberg equilibrium (SE) is formulated, which is defined as follows:
\begin{defn}[Stackelberg Equilibrium]
Given a Stackelberg game $\mathbb{G}$ defined in (22), we say a set of strategies $\left(p_b^*,\{\beta^i_*\}_{i\in\mathcal{I}}\right)$ reaches an Stackelberg equilibrium of game $\mathbb{G}$ if and only if the following inequalities are met,
\begin{flalign}
&U^i\left(p^*_b,\beta^i_*,\{\beta^j_*\}_{j\in\mathcal{I}\backslash\{i\}}\right)\geq U^i\left(p^*_b,\beta^i,\{\beta^j_*\}_{j\in\mathcal{I}\backslash\{i\}}\right)\\
&L\left(p^*_b,\{\beta^i_*\}_{i\in\mathcal{I}}\right)\geq L\left(p_b,\{\beta^i_*\}_{i\in\mathcal{I}}\right)
\end{flalign}
where this strategy $\left(p_b^*,\{\beta^i_*\}_{i\in\mathcal{I}}\right)$ is the point of equilibrium and satisfies $p_b^*\in[p_m,p_c]$, $\beta^i\in[0,1]$ for $i\in\mathcal{I}$.
\end{defn}

Based on Definition 1, neither the leader and the followers can improve their utilities by changing their strategies respectively when the SE $\left(p_b^*,\{\beta^i_*\}_{i\in\mathcal{I}}\right)$ is reached. However, it is possible for the noncooperative game with pure strategies that the point of equilibrium does not exist \cite{doi:10.1137/1.9781611971132}. Hence, we want to know whether our proposed game $\mathbb{G}$ exists an SE.
\begin{thm}
Between APG and CCHPs in $\mathcal{I}$, it exists a unique SE in our Stackelberg game $\mathbb{G}$.
\end{thm}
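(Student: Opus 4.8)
The plan is to solve the game by backward induction, exploiting two structural features of the problem: the followers are \emph{decoupled} (each $U^i$ in (9) depends only on its own $\beta^i$ and on the common bid $p_b$, not on $\beta^j$ for $j\neq i$), and both the follower and the leader objectives turn out to be strictly concave in their respective decision variables. Strict concavity at each level delivers uniqueness, while compactness of the strategy sets delivers existence.

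First I would analyze the lower level. Fix any bid $p_b\in[p_m,p_c]$. Each $U^i(\beta^i)$ in (9) is a sum of a term $k^i_1$ times the logarithm of an affine function of $\beta^i$ (strictly concave), a term independent of $\beta^i$, and a term linear in $\beta^i$; differentiating the first-order expression (14) once more shows $\partial^2 U^i/\partial(\beta^i)^2<0$ throughout $[0,1]$. Hence $U^i$ is strictly concave on the compact interval $[0,1]$ and CCHP $i$ has a unique maximizer. The stationarity condition $\partial U^i/\partial\beta^i=0$ gives the closed form $\beta^i_\Diamond(p_b)$ of (15), and the admissibility range (16) for $k^i_1$ guarantees this interior point lies in $[0,1]$ for every $p_b\in[p_m,p_c]$. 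Because the followers do not interact, no fixed-point argument is needed: for each $p_b$ the best-response profile $\{\beta^i_\Diamond(p_b)\}_{i\in\mathcal{I}}$ is unique and depends smoothly on $p_b$.

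Next I would pass to the upper level. Substituting (15) into (12) produces the reduced profit $L(p_b)$ of (18), which is valid on the whole interval $[p_m,p_c]$ precisely because (16) keeps every follower's response interior. The crux of the whole argument---the concavity claim the text defers---is to verify that $L$ is strictly concave. After collecting terms, $L(p_b)$ splits into a part affine in $p_b$ and a part proportional to $1/p_b$ arising from $(p_c-p_b)\sum_i k^i_1/p_b$; differentiating the first-order expression (19) once more yields $\partial^2 L/\partial p_b^2=-2p_c(\sum_{i\in\mathcal{I}}k^i_1)/p_b^3$, which is strictly negative for $p_b>0$ since $p_c>0$ and $k^i_1>0$. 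Thus $L$ is strictly concave on $[p_m,p_c]$.

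Finally I would assemble existence and uniqueness. Strict concavity of the continuous function $L$ on the compact interval $[p_m,p_c]$ gives a unique maximizer $p_b^*$ by the standard Weierstrass-plus-strict-concavity argument: $p_b^*$ is the interior stationary point $p_b^\circ$ of (20) when $p_b^\circ\in(p_m,p_c)$, and is the nearer endpoint otherwise, exactly the projection recorded in (21). Setting $\beta^i_*=\beta^i_\Diamond(p_b^*)$ for each $i$ then yields the pair $(p_b^*,\{\beta^i_*\}_{i\in\mathcal{I}})$, which satisfies the defining inequalities (23)--(24) by construction and is therefore the unique SE. I expect the concavity check for $L$ in the third step to be the main obstacle, since the decoupling of the followers and the concavity of each $U^i$ are comparatively immediate; once $\partial^2 L/\partial p_b^2<0$ is in hand, existence and uniqueness follow routinely.
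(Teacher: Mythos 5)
Your proposal is correct and follows essentially the same route as the paper's proof: backward induction, showing $\partial^2 U^i/\partial(\beta^i)^2<0$ so each follower has a unique best response $\beta^i_\Diamond(p_b)$, then substituting into the leader's objective and showing $\partial^2 L/\partial p_b^2=-2p_c\bigl(\sum_{i\in\mathcal{I}}k^i_1\bigr)/p_b^3<0$ to get a unique optimal bid, hence a unique SE. Your additional remarks (follower decoupling avoiding any fixed-point argument, the role of condition (16) in keeping responses interior, and the endpoint projection (21)) are sound elaborations of the same argument rather than a different one.
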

\begin{proof}
For each CCHP $i\in\mathcal{I}$, and based on the first-order derivative of $U^i$, shown as (14), we have
\begin{equation}
\frac{\partial^2 U^i}{\partial\beta^{i2}}=-k^i_1\cdot\left(\frac{b^i_1qF^i_{tot}\eta^i_{pgu}}{1+b^i_1qF^i_{tot}\eta^i_{pgu}\beta^i}\right)^2
\end{equation}
where $\partial^2 U^i/\partial\beta^{i2}<0$, thus utility function $U^i$, shown as (9), is strictly concave with respect to $\beta^i$ for $i\in\mathcal{I}$. Thus, given any bid $p_b$ by APG, for each CCHP $i\in\mathcal{I}$, it exists a unique partition coefficient $\beta^i_\Diamond$ selected from $[0,1]$, shown as (15), that maximizing $i$'s utility function. A SE can be reached when the APG and CCHPs have their maximum profit and utilities respectively. Because the strategy $\beta^i_\Diamond$ is unique for $i\in\mathcal{I}$ given a bid $p_b$, we have that the game $\mathbb{G}$ reaches an equilibrium only if the APG is capable of finding the best bid $p_b^*$ at which each CCHP $i$ select the unique partition $\beta^i_*$ that maximizing its utility given this best bid $p_b^*$ simultaneously. Given the optimized responses by all CCHPs, consider (18) and (19), we have
\begin{equation}
\frac{\partial^2 L}{\partial p_b^2}=-\frac{2p_c}{p_b^3}\cdot\sum_{i\in\mathcal{I}}k^i_1
\end{equation}
where $\partial^2 L/\partial p_b^2<0$, thus profit function, shown as (18), is strictly concave with respect to the bid $p_b$. The APG can acquire the maximum profit by finding a unique bid $p_b^*$ given the optimized response of  CCHPS. Therefore, our proposed game $\mathbb{G}$ exists a unique SE definitely.
\end{proof}

\subsection{Distributed Algorithm}
As mentioned earlier, complete information about CCHPs is not available for the APG. Hence, instead of centralized fashion, a distributed algorithm needs to be designed, where the APG is not required to know the parameters information of CCHPs, but only receive the amount of electric energy each of them plans to sell. The APG and all of CCHPs can reach the unique SE of Stackelberg game $\mathbb{G}$ in an iterative manner by use of limited communications between leader and followers. This distributed algorithm is shown in Algorithm \ref{a1}.

\begin{algorithm}[h]
	\caption{\text{Finding SE}}\label{a1}
	\begin{algorithmic}[1]
		\STATE Initialize: $p^*_b:=p_m$, $L^*:=(p_s-p_c)R$
		\FOR {each CCHP $i\in\mathcal{I}$}
		\STATE Initialize: $\beta^i_*=0$
		\ENDFOR
		\FOR {the bid $p_b$ from $p_m$ to $p_c$}
		\FOR {each CCHP $i\in\mathcal{I}$}
		\STATE CCHP $i$ decides on its partition coefficient $\beta^i_\Diamond$ according to $\beta^i_\Diamond=\frac{1}{qF^i_{tot}\eta^i_{pgu}}\cdot\left(\frac{k^i_1}{p_b}-\frac{1}{b^i_1}\right)$
		\ENDFOR
		\STATE The APG computes its profit $L$ based on the response $\beta^i_\Diamond$ for each $i\in\mathcal{I}$ according to
		\begin{equation*}
			L=(p_c-p_b)\sum_{i\in\mathcal{I}}\left(qF^i_{tot}\eta^i_{pgu}(1-\beta^i_\Diamond)\right)+(p_s-p_c)R
		\end{equation*}
		\IF {$L\geq L^*$}
		\STATE $p^*_b:=p_b$, $L^*:=L$
		\STATE $\beta^i_*:=\beta^i_\Diamond$ for each $i\in\mathcal{I}$
		\ENDIF
		\ENDFOR
		\RETURN $\left(p_b^*,\{\beta^i_*\}_{i\in\mathcal{I}}\right)$
	\end{algorithmic}
\end{algorithm}
\noindent
For Algorithm \ref{a1}, in each iteration, the APG offers a bid first, then each CCHP $i\in\mathcal{I}$ decides on its best partition coefficient $\beta^i_\Diamond$ according to (15) based on the bid $p_b$ offered by APG, and sends it the APG. After receiving all responses $\{\beta^i_\Diamond\}_{i\in\mathcal{I}}$ from CCHPs, the APG is able to compute its current profit $L$ according to (12). If this profit $L$ is better than before, it updates the global variables $p^*_b$ and $L^*$. The result $\left(p_b^*,\{\beta^i_*\}_{i\in\mathcal{I}}\right)$ returned by Algorithm \ref{a1} sastisfies the definition in (23) (24), thus the game $\mathbb{G}$ reaches the SE.

\section{Blockchain-Enabled Energy Management}
Mentioned before, in a city, it can be divided into a number of communities, each of which is equipped with a CCHP system responsible for supplying energy for this community. In this city, there is an APG, and it can trade with all CCHPs in this city. Like this, an ecosystem of energy trading is formulated, which is composed of a number of cities. For instance, a country is a typical energy ecosystem. Here, the ecosystem is denoted by $\mathbb{E}=\{\text{E}_1,\text{E}_2,\cdots\}$, where we have $\text{E}_j=\{\{\text{APG}_j\},\{\text{CCHP}_j^1,\text{CCHP}_j^2,\cdots\}\}$ for each $\text{E}_j\in\mathbb{E}$. In this section, we propose the blockchain-enable energy management system, shown in Fig. \ref{fig3}, which is made up of IoE sub-systems and blockchain sub-system. Generally speaking, a city $\text{E}_j\in\mathbb{E}$ is an IoE sub-system, which is composed of an APG and a number of CCHPs in this city. It mainly realizes information interactions and finishes energy transactions between APG and CCHPs in this city. Participants of the IoE sub-system are connected with each other by P2P communication. The blockchain sub-system consists of all APGs in the ecosystem $\mathbb{E}$, where all APGs are connected by P2P communication. Apart from the effect described above in the IoE sub-system, it needs to verify and record those energy transactions between APG and CCHPs in a secure and trusted manner. The workflow of our proposed blockchain-enabled energy management system is as follows: First, a transaction between APG and CCHPs is initiated and finished in an IoE sub-system; Then this transaction can be verified and stored permanently in the blockchain sub-system.

\begin{figure}[!t]
	\centering
	\includegraphics[width=\linewidth]{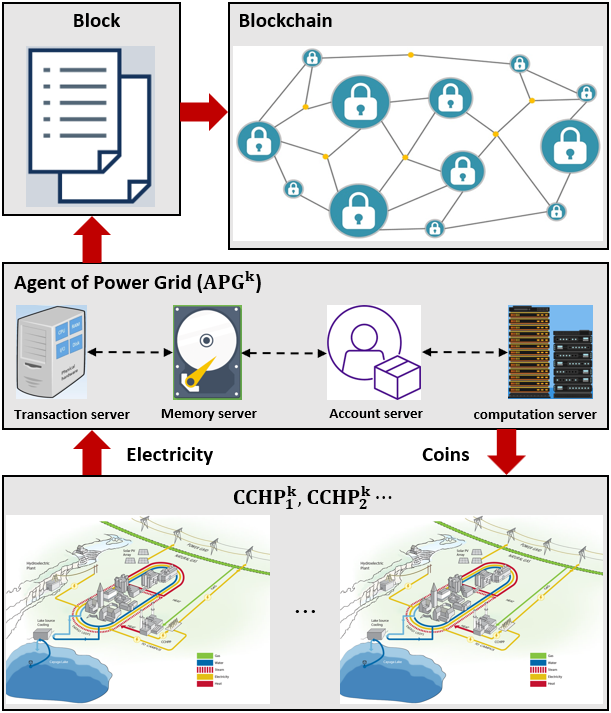}
	\caption{The workflow of our blockchained-enable energy management system, and the schematic diagram of CCPH is from \cite{cchp.org}.}
	\label{fig3}
\end{figure}

\subsection{IoE Sub-system}
Consider in Fig. \ref{fig3}, for each $\text{APG}_j\in\text{E}_j\in\mathbb{E}$ in the ecosystem, there are four major components: a transaction server, an account server, a memory server, and a computation server. The transaction server is a central controller, which is mainly responsible for giving a bid, collecting the responses from CCHPs in its city, adjusting trading strategy, and decide whether to trade. Each entity, including APG and CCHPs, in each city $\text{E}_j\in\mathbb{E}$ has a personal account in the account server of $\text{APG}_j$, which stores personal transaction records. Besides, there is a wallet associated with each personal account, and the digital assets of each entity are stored in its wallet in the form of energy coin \cite{kang2017enabling}. For privacy protection, true address of the wallet is hidden by a public key (random pseudonym), and the mapping relationships between the personal accounts and the public keys of their associated wallets are stored in this account server. Besides, the memory server and computation server are mainly used in the later blockchain sub-system, we will introduce in the next section.

As introduced before, the IoE sub-system is mainly responsible for information interactions and finishing transactions. The information includes the bid of APG and responses (amount of energy) of CCHPs. Thus, our core problem is how to protect the privacy during information interaction and how to ensure security during trading. Here, we design a smart contract to solve this problem. Smart contract empower credible transactions without third parties based on blockchain technology, which can ensure transactions to be trackable and irreversible, but reduce time and cost at the same time. In the beginning, each entity, including APG and CCHPs, needs to register on trusted institution, e.g., a department authorized by government, to become a legitimate entity, and obtain its ID, public key, private key, certificate and wallet address, which is assigned by this trusted institution. Here, the public and private key can be achieved by some existing algorithms, such as elliptic curve digital signature algorithm \cite{johnson2001elliptic}, lattice-based signature scheme \cite{gao2018secure} and anti-quantum signature scheme \cite{yin2018anti}. The certificate links each entity with its registration information. The information of entities forms a mapping, and stored in the account server of the APG in its city. After initialization, they can verify the integrity of its wallet in corresponding account server. Now, details of smart contract for energy trading is described as follows:
\begin{enumerate}
	\item \textit{Energy requesting:} The APG (transaction server) broadcast the energy request to all of the CCHPs in its city. If it makes energy requesting, it needs to send the message that includes its bid $p_b$.
	\item \textit{Response:} Each CCHP in this city that is willing to sell its energy to the APG has to determine the amount of energy to sell, and give responses back to the APG with it. If no response, means that it does not trade.
	\item \textit{Decide on whether start a transaction:} After receiving all responses from CCHPs, the APG (transaction server) need to decide whether to start trading or change bid. If changing bid, go back to step (1) and give a new buying price; or else go to the next step.
	\item \textit{Start trading:} The APG (transaction server) broadcasts the notification that it agrees to start this trade to all of CCHPs in its city, and CCHPs send their wallet addresses to the APG. Then, the energy is delivered from CCHPs to their corresponding APG by electricity pipeline or wireless transmission equipment.
	\item \textit{Payment by energy coins:} The energy coins are transferred to the wallet of CCHPs that have completed providing electric energy according to their wallet addresses. Then, these transaction records are generated by the APG, needed to be verified and digitally signed by the CCHPs to ensure its integrity and authenticity.
	\item \textit{Transactions recording:} The information of transactions finished in this city has to be recorded in account server first, and then upload into the memory server at the same time. These recorded transactions in memory server are broadcasted to other APGs (memory servers) in the whole ecosystem to be further verified and recorded. Invalid transactions will be neglected, and others will be added into the new block in future.
\end{enumerate}

In addition, it is worth noting that all the information that transmitted in this system, including energy requesting, response, transaction, and payment, must be signed by the sender in order to make sure the authenticity and traceability of the information and permit others to validate it. Here, for each CCHP, its private key is used for signing transactions, and its public key is shared with all APGs in the ecosystem, where they can use it to verify its signature. 

\subsection{Bloackchain Sub-system}
After a transaction is uploaded into the memory server, it will be broadcasted to all the memory servers of APGs in the ecosystem. All of the transaction records (blocks) in the ecosystem $\mathbb{E}$ are stored in the memory server. Next, all memory servers start competing for accounting rights, e.g., adding the new block to blockchain. The mining process is similar to find a correct proof-of-work in Bitcoin system. Denoted $data$ by the information of the header of previous block, merkel root of transactions, timestamps and so on, this process required the memory server of each APG in the ecosystem to look for a nonce $a$ such that ${Hash}(data+a)<b$ \cite{alqassem2014towards}, where $b$ is the level of difficulty to find $a$ that controlled by the system. If an APG finds such a nonce successfully, it must broadcast its block and this nonce to other APGs (memory servers) as soon as possible. Other APGs audit it and decide to accept or reject it after receiving the result. If the number of APGs in the ecosystem that agree to accept it satisfies the condition of consensus, this new block will be added into blockchain. At this time, the energy coins associated with transactions in this new added block is transferred to corresponding wallets substantially. Besides, the APG that generating this new block will be rewarded by a certain number of energy coins.

Shown as above, APGs require computational power to compete for adding a new block, thus, in each APG, there is a computation server responsible for providing enough computational power. If someone wants to create a false block, it will finish a correct proof-of-work independently (find a valid nonce) before other APGs and has ability to dominate the majority of APGs. It's almost impossible for a single node. If someone wants to modify a transaction in some block, it will be impeded because the header of its subsequent block is related to this transaction. Thus, a tiny modification to a transaction in some block will affect all the subsequent blocks.

To execute the consensus process, it is executed among all APGs in the ecosystem, and the leader is the first one that gives a correct proof-of-work. The leader broadcasts its proof-of-work, block, and timestamp to all other APGs, then these APGs verify this proof-of-work as well as block, and broadcast their verified block each other with their signature. Each APG contrasts its result with those verified results from other APGs, and reports this comparison result to the leader with its signature. After receiving all responses from APGs, the leader will perform on them with statistical analysis. If all of them approve this block, the leader will broadcast a notification that claims the new block should be added into blockchain in chronological order, and it is rewarded by energy coins. If some APGs reject this block, the leader will find out the reasons why they reject, and send the block to them for second verification if necessary. The total time that used to reach consensus is almost unchanged no matter what the network size is, when the number of APGs in the ecosystem remains stable \cite{luu2016secure}. All the processes are completed by the memory server of each APG.

\section{Security Analysis and Simulations}
In this section, we will analyze the security of our proposed blockchain-enabled energy management system, and conduct simulations to verify the Stackelberg game.
\subsection{Security Analysis}
For each APG in the ecosystem, it is not only an entity that participating in the transaction in IoE sub-system, but also a node that stores the blockchain in blockchain sub-system. Different roles are played by different servers, and these servers work together but work independently. Thus, our energy management system inherits the characteristics of the blockchain, shown as follows:

\begin{enumerate}
	\item \textit{Decentralization:} The energy trades between APG and CCHPs are carried out in a P2P manner, and the bookkeeping process is finished among APGs without a third trusted intermediary.
	\item \textit{Privacy protection:} Each CCHP uses its public key to communicate with the APG in its city, and this public key is shared among the APGs to be verified without disclosing true identity. Besides, the wallets of CCHPs are hidden by pseudonyms and can only be accessed by corresponding key and certificate, which avoiding malicious attacks against a specific entity.
	\item \textit{Authentication:} All transactions need to be audited and verified publicly in the consensus process by all APGs in the ecosystem. It is extremely hard to dominate the majority of APGs to create an unreal block because of the difficulty of proof-of-work.
	\item \textit{Integrity:} Any block that new added into blockchain contains the hash of the previous block, and its subsequent block contains its hash. A malicious attacker that attempts to modify a transaction must create a new chain after the block this transaction is in by dominating the majority of computational power, this is impossible. Besides, every transaction in block is encrypted, it is hard to be decrypted without the private key.
	\item \textit{Transparency:} The nature of decentralization requires the blockchain to be saved in all memory servers of APGs. Thus, it is transparent to every entity, and CCHPs are able to check and confirm those transactions that they participate in easily.
	\item \textit{No double-spending:} The blockchain provides all entities with a public ledger of transactions in the ecosystem, which avoids double-spending potentially.
\end{enumerate}

\subsection{Numerical Simulation}
First of all, we plan to evaluate the properties of utility function of APG and CCHPs. Let us consider such a city that there is an APG and only one CCHP. Typically, the calorific value of natural gas is $3.6\times10^7$ $\text{J}/\text{m}^3$ at the standard atmosphere and the measure for electricity is $\text{ kW}\cdot\text{h}$, where $1\text{ kW}\cdot\text{h}=3.6\times10^6\text{ J}$. According to the latest U.S. electricity price, that is $0.2\text{ dollar}/\text{kW}\cdot\text{h}$, hence, we can set $p_s=5.5\times10^{-8}\text{ dollar}/\text{J}$. In our energy management system, it can be considered as $p_s=5.5\times10^{-8}\text{ coin}/\text{J}$ equivalently. We assume the cost price $p_c=4\times10^{-8}\text{ coin}/\text{J}$ because the cost price should be less than retail price. Besides, for this CCHP, we assume its $F_{tot}=200\text{ m}^3$, $\eta_{pgu}=1$ and $R=0$ for simplicity, because these settings does not affect the properties of our objective functions. From (16), we have $p_c\leq e\cdot p_m$, thus we are able to assume $p_m=2\times10^{-8}\text{ coin}/\text{J}$ and its parameter $k_1$ satisfies $k_1\in[167.6093,227.8046]$. Thereby we have the range of variables, that is $\beta\in[0,1]$ for CCHP and $p_b\in[2\times10^{-8},4\times10^{-8}]$ for APG definitely.

Fig. \ref{fig4} draws the relationship between CCHP's utility $U$ and partition coefficient $\beta$ under different bid $p_b$ given by APG, where $k_1=197.7069$. Shown in Fig. \ref{fig4}, as $\beta$ increases, these utility functions increase first and then decrease regardless of what the bid is. It proves that the utility function $U$, shown as (9), is concave. Furthermore, as $p_b$ increases, the point of maximum utility moves to the right, which means that the CCHP tends to allocate more electric energy that sold to the APG by increasing $\beta$ in order to obtain the maximum utility. Fig. \ref{fig5} draws the relationship between APG's profit $L$ and buying price $p_b$ under different parameter $k_1$ set by CCHP. Shown in Fig. \ref{fig5}, these profit functions increase first and then decrease regardless of what the $k_1$ is. It proves that the profit function $L$, shown as (18), is concave. Furthermore, as $k_1$ increases, the point of maximum profit moves to the right, which means that the APG has to give a higher bid to buy the electric energy from CCHP in order to obtain the maximum profit. Here, a larger $k_1$ implies the energy that used to serve community contributes much to the total utility, thus the APG has to offer a higher bid to buy energy.

\begin{figure}[!t]
	\centering
	\includegraphics[width=\linewidth]{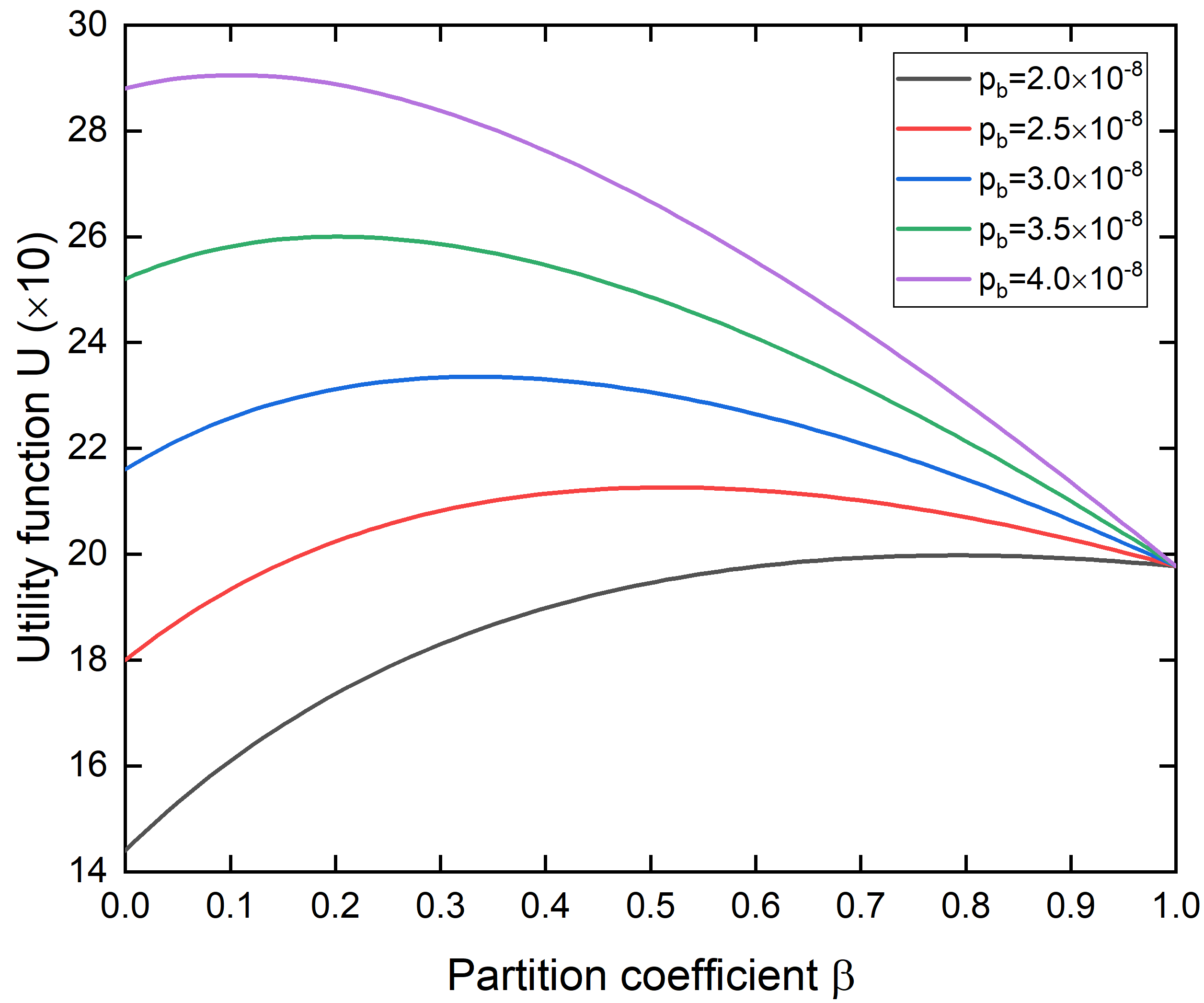}
	\caption{The relationship between CCHP's utility $U$ and partition coefficient $\beta$ under different bid $p_b$ given by APG.}
	\label{fig4}
\end{figure}

\begin{figure}[!t]
	\centering
	\includegraphics[width=\linewidth]{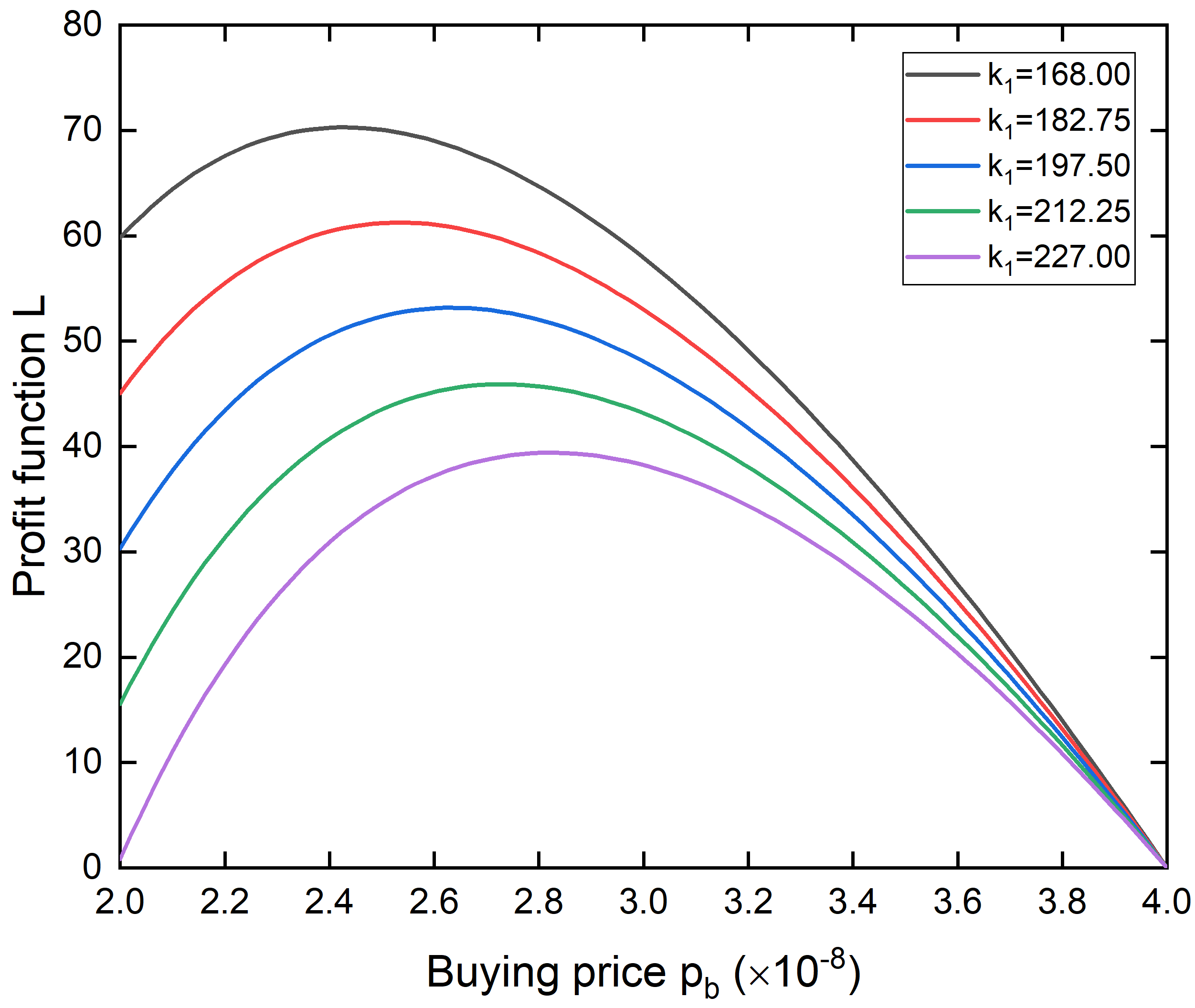}
	\caption{The relationship between APG's profit $L$ and buying price $p_b$ under different parameter $k_1$ set by CCHP.}
	\label{fig5}
\end{figure}

Then, we consider a common city there are a number of CCHPs, denoted by a set $\mathcal{I}$, in this city want to sell electric energy to the APG. The setting of $p_c$ and $p_m$ is the same as above, thus we have $p_b\in[2\times10^{-8},4\times10^{-8}]$ as well. For each CCHP $i\in\mathcal{I}$, we assume its $F^i_{tot}=200\text{ m}^3$ and $\eta^i_{pgu}=1$ for simplicity. The parameter $k^i_1$ for different CCHP $i\in\mathcal{I}$ can be specified arbitrarily, but all of them satisfies $k^i_1\in[167.6093,227,8046]$ definitely. Here, we set electric load $R$ in (18) as $R=2\cdot\sum_{i\in\mathcal{I}}qF^i_{tot}\eta^i_{pgu}$. Now, we can evalute the performance of the convergence to the SE by following distributed Algorithm \ref{a1}.

Fig. \ref{fig6} and Fig. \ref{fig7} draw the process of APG's buying price $p_b$ and CCHPs' partition coefficient $\beta^i$, $i\in\mathcal{I}$, converged to SE by following Algorithm \ref{a1} in a city with five CCHPs set as above. At the beginning, bid $p_b$ offer by APG is low, the CCHPs are unwilling to sell their electric energy to the APG, hence, their partition coefficient is very high. By interacting with each CCHP in this city, the APG adjusts its strategy (increases its bid) gradually in each iteration to encourage CCHPs to sell more energy in order to obtain larger profit. After the $32$-th iterations, the APG gains the largest profit at the buying price $2.64\times10^{-8}$ $\text{coin}/\text{J}$, which is the point of Stackelberg equilibrium, thus their SE is reached.

\begin{figure}[!t]
	\centering
	\includegraphics[width=\linewidth]{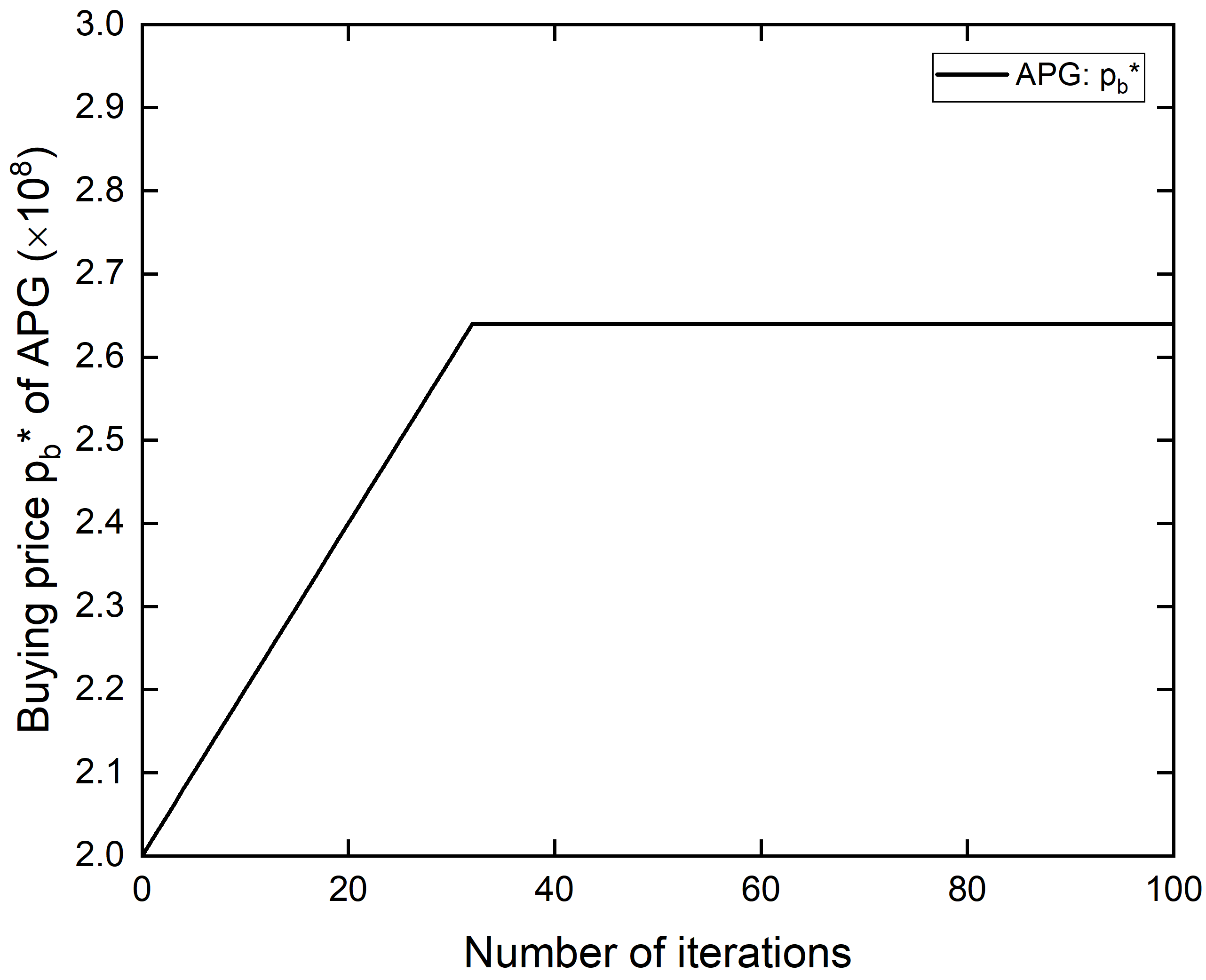}
	\caption{The process of the APG's buying price $p_b$ converged to SE by following Algorithm \ref{a1}.}
	\label{fig6}
\end{figure}

\begin{figure}[!t]
	\centering
	\includegraphics[width=\linewidth]{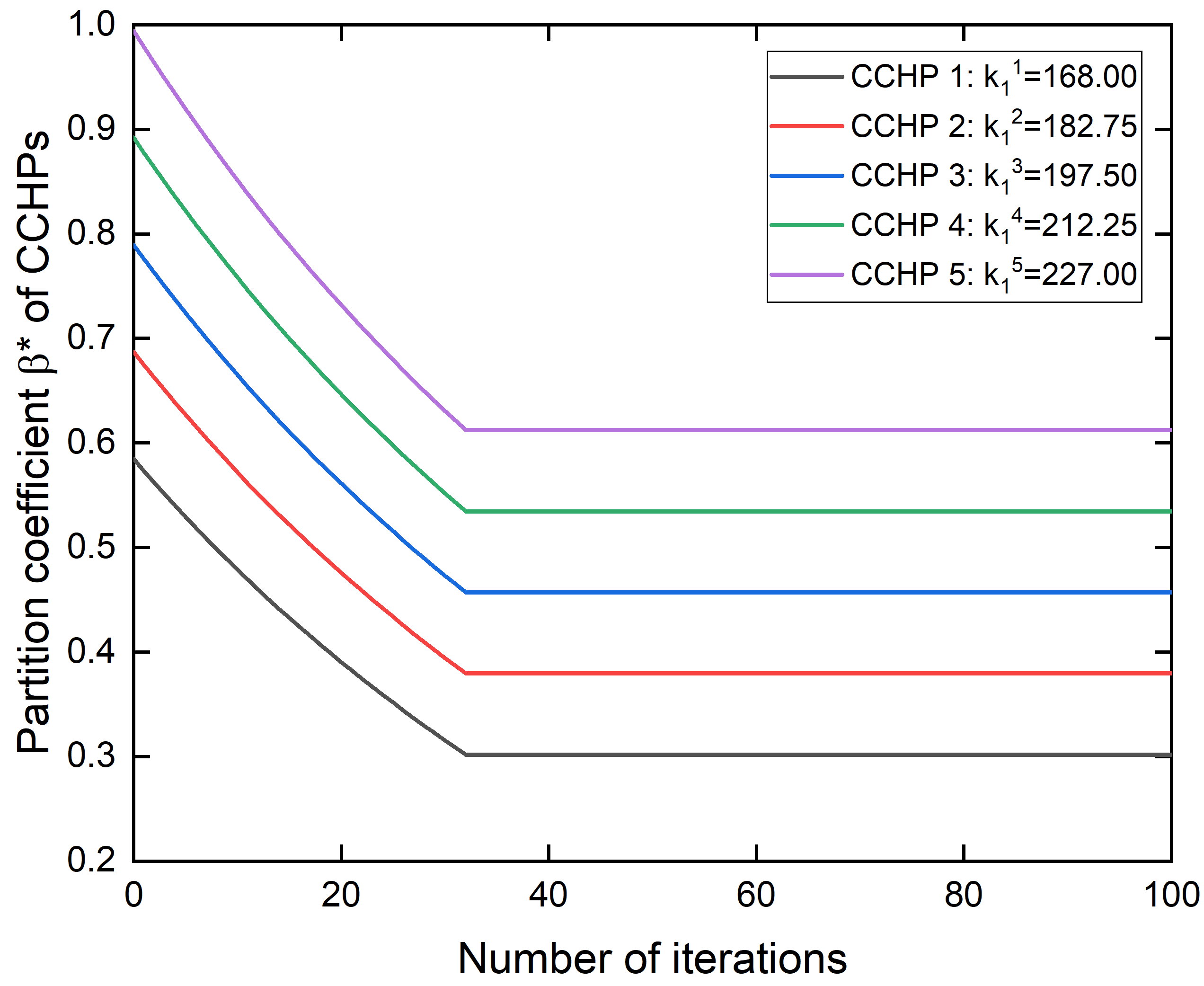}
	\caption{The process of those CCHPs' partition coefficient $\beta^i$ converged to SE by following Algorithm \ref{a1}.}
	\label{fig7}
\end{figure}

Finally, we have discussed before that the APG can acquire the optimal bid directly in a centralized manner if all parameters of CCHPs in its city are known. Here, we can compare the performance of profits that are obtained in both the centralized and our proposed distributed manner. The comparison results are shown in Table \ref{table1}. Here, we set electric load $R$ as $R=30\cdot qF_{tot}$, where $F^i_{tot}=F_{tot}=200\text{ m}^3$, $i\in\{1,2,\cdots,30\}$, and $k_1^i$, $i\in\{1,2,\cdots,30\}$, is sampled uniformly from $[167.6093,227,8046]$. Shown in Table \ref{table1}, the profits of APG at the SE of this game obtained by following our distributed algorithm are very close to that computed in the centralized manner regardless of the number of CCHPs. The profit in a centralized manner is slightly higher than that under distributed algorithm, thus its performance is better because of complete information. We execute 100 iterations between $[2\times10^{-8},4\times10^{-8}]$, thus the stride is $2\times10^{-10}$. To improve the accuracy of distributed algorithm further, we can reduce the stride through increasing the number of iterations. In addition, we assess how the profit of APG changes with the different number of CCHPs in the city by comparing with the base profit. The base profit is computed under the the circumstance that there is no CCHP in the city, which is equal to $(p_s-p_c)\cdot R=3240$. This implies that all required energy $R$ has to be generated at the cost price. Accordingly, the base profit is lower than that involved with CCHPs. Next, the profit of APG increases gradually with the increase of the number of CCHPs in the city, because the APG is able to buy more electric energy from CCHPs at a price lower than cost price. Thus, the profit will be increased certainly. The increment (the last column) in Table \ref{table1} measures the performance compared to base profit because of CCHPs' existence, namely quantified by actual profit divided by base profit. The effect is getting more and more significant that increasing from $108\%$ to $147\%$ as the number of CCHPs increases.

\begin{table}[h]
	\renewcommand{\arraystretch}{1.3}
	\caption{The APG's profits obtained in centralized and distributed manner under different \# of CCHPs in the city.}
	\label{table1}
	\centering
	\begin{tabular}{|c|c|c|c|c|c|}
		\hline
		\bfseries $|\mathcal{I}|$& \multicolumn{2}{|c|}{\bfseries Centralized} & \multicolumn{2}{|c|}{\bfseries Distributed} & \bfseries Incret\\
		\hline
		--- & \bfseries $p_b^*$ & Profit & $p_b^*$ & Profit & -------\\
		\hline
		5  & 2.6300 & 3507.2231 & 2.6400 & 3507.2026 & 108\%\\
		\hline
		10 & 2.5969 & 3800.6279 & 2.6000 & 3800.6236 & 117\%\\
		\hline
		15 & 2.6225 & 4050.5182 & 2.6200 & 4050.5142 & 125\%\\
		\hline
		20 & 2.6645 & 4255.7724 & 2.6600 & 4255.7551 & 131\%\\
		\hline
		25 & 2.6656 & 4507.6399 & 2.6600 & 4507.6065 & 139\%\\
		\hline
		30 & 2.6487 & 4799.9251 & 2.6400 & 4799.8273 & 148\%\\
		\hline
	\end{tabular}
\end{table}

\section{Conclusion}
In this paper, we constructed CCHP system, and discussed the energy trading between APG and CCHPs in a smart city, where a Stackelberg game was formulated. Then, we designed utility functions through valid simplifications in order to be suitable for completing this game. We showed that the SE between APG and CCHPs in the city is guaranteed to exist and unique. Thus, we proposed a distributed algorithm that is able to reach the SE by limited iterations. To protect privacy and ensure transaction security, we created a blockchain-enabled energy management system, where the information interactions and energy transactions can be completed without a trusted third institution. Security analysis showed our system is secure and reliable. The results of numerical simulations indicated that our model is valid, and SE can be reached by our proposed distributed algorithm.

In the future, this energy trading can be improved and extended further. CCHP system is an integrated energy system, which implies not only selling electricity to the grids, but supplying heat to heat stations or cooling to cooling stations. Hence, it forms a multi-leader Stackelberg game, where the utility functions of the followers are multivariate functions.


%

\section*{Acknowledgment}

This work is partly supported by National Science Foundation under grant 1747818.

\ifCLASSOPTIONcaptionsoff
  \newpage
\fi



%

\bibliographystyle{IEEEtran}
\bibliography{references}

%




\end{document}